\newtheorem{teo}{Theorem}
\newtheorem{theorem}[teo]{Theorem}
\newtheorem{lemma}[teo]{Lemma}
\newtheorem{prop}[teo]{Proposition}
\newtheorem{proposition}[teo]{Proposition}
\newtheorem{assumption}{Assumption}
    \newcommand{\setdef}[2]{\{#1 \, : \; #2\}}
\newcommand{\real}{\mathbb{R}}
\newcommand{\integernonnegative}{\mathbb{Z}_{\ge 0}}
\newcommand{\N}{\mathbb{N}}  
\newcommand{\G}{\mathcal{G}} 
\newcommand{\V}{\mathcal{V}} 
\newcommand{\E}{\mathcal{E}} 
\newcommand{\neigh}{ \mathcal{N}} 	
\newcommand{\card}[1]{|#1|}  	
\newcommand{\Exp}{\mathds{E}} 
\newcommand{\Prob}{\mathds{P}} 
\newcommand{\e}{\mathrm{e}} 
\newcommand{\1}{\mathbf{1}} 
\newcommand{\diag}{\operatorname{diag}} 
\title{Gossips and Prejudices: Ergodic Randomized Dynamics in Social Networks
\thanks{The authors are grateful to Profs. Giacomo Como, Fabio Fagnani, and Noah E. Friedkin for insightful conversations on the topics of this paper. }
}%
\author{Paolo Frasca\thanks{Department of Mathematical Sciences (DISMA), Politecnico di Torino, Torino, Italy. (e-mail: paolo.frasca@polito.it)} 
\and
Chiara Ravazzi\thanks{Department of Electronics and Telecommunications (DET), Politecnico di Torino, Italy. (e-mail: {chiara.ravazzi@polito.it})} 
\and
Roberto Tempo\thanks{CNR-IEIIT, Politecnico di Torino, Italy. (e-mail: roberto.tempo@polito.it)} 
\and
Hideaki Ishii\thanks{Department of Computational Intelligence and Systems Science, Tokyo Institute of Technology, Japan. (e-mail: {ishii@dis.titech.ac.jp})}
}%
\begin{document}
\maketitle

\begin{abstract}                
In this paper we study a novel model of opinion dynamics in social networks, which has two main features. First, agents asynchronously interact in pairs, and these pairs are chosen according to a random process. We refer to this communication model as ``gossiping''. Second, agents are not completely open-minded, but instead take into account their initial opinions, which may be thought of as their ``prejudices''. In the literature, such agents are often called ``stubborn".
We show that the opinions of the agents fail to converge, but persistently undergo ergodic oscillations, which asymptotically concentrate around a mean distribution of opinions. This mean value is exactly the limit of the synchronous dynamics of the expected opinions.
\end{abstract}

\section{Introduction}

The study of opinion dynamics has recently started to attract the attention of the control community. This interest is in large part motivated with the bulk of knowledge which has been developed about methods to approximate and stabilize consensus, synchronization, and other coherent states. 
However, in contrast with many engineering systems, social systems do not typically exhibit a consensus of opinions, but rather a persistence of disagreement, possibly with the formation of opinion parties.
It is then essential to understand which features of social systems prevent the formation of consensus. To the authors' understanding, scholars have focused on two key reasons: opinion-dependent limitations in the network connectivity and obstinacy of the agents.

The first line of research has seen a growth of models involving ``bounded confidence'' between the agents: if the opinions of two agents are too far apart, they do not influence each other. These models typically result into a clusterization of opinions: the agents split into non-communicating groups, and each group reaches an internal consensus. 
Influential models have been defined in~\cite{GD-DN-FA-GW:00,RH-UK:02}, and their understanding has been recently deepened by the control community, which has studied evolutions both in discrete time \cite{VDB-JMH-JNT:09,CC-FF-PT:12} and in continuous time~\cite{VDB-JMH-JNT:09a,FC-PF:11}, possibly including heterogenous agents~\cite{AM-FB:11f} and randomized updates~\cite{JZ-YH:12,JZ-GC-YH:12}.

Although interesting and motivated, these ``bounded confidence'' models do not seem to be sufficient to explain the persistence of disagreement in real societies, in spite of persistent contacts and interactions between agents. Instead, a persistent disagreement is more likely a consequence of the agents being unable, or unwilling, to change their opinions, no matter what the other agents' opinions are. This observation has been made {  by social scientists, as in the models introduced in~\cite{NEF-ECJ:99,NEF:11}, and more recently by physicists~\cite{MM-AP-SR:07}. Since this idea has spread to  applied mathematics and systems theory, several models have already been studied in detail,} using techniques from stochastic processes~\cite{GC-FF:10,DA-GC-FF-AO:11,EY-AD-AO-AS-AS:11} and from game theory~\cite{GCC-JSS:10,JG-RS:13}.

Following the latter line of research, in this paper we define a {\em{gossip dynamics}} such that at each time step a randomly chosen agent updates its opinion to a convex combination of its own opinion, the opinion of one of its neighbors, and its own initial opinion or ``prejudice''.
We show that, although the resulting dynamics persistently oscillates, its average is a stable opinion profile, which is not a consensus. { This means that the expected beliefs of an agent will not in general achieve, even asymptotically, an agreement  with the other agents in the society.
Furthermore, we show that the oscillations of opinions} are ergodic in nature, so that the averages along sample paths are equivalent to the ensemble averages. 

Our work has been deeply influenced from reading the papers~\cite{NEF-ECJ:99} and~\cite{DA-GC-FF-AO:11}, which also include agents with prejudices. Compared to the former paper, our contribution is a new model of communication between agents and, thus, of opinion evolution: a more precise discussion is given below in Section~\ref{sect:relation-gossip-friedkin}. Compared to the latter paper, which also proves an ergodic theorem, we allow the agents to have a continuum of degrees of obstinacy, rather than a dichotomy stubborn/non-stubborn. The qualitative picture, however, shows strong similarities. 

Finally, we point out that we have recently performed a similar analysis of ergodicity for a randomized algorithm, which solves the so-called localization problem for a network of sensors~\cite{CR-PF-HI-RT:13a}, \cite{CR-PF-RT-HI:13b}. We are confident that these techniques may foster the understanding of other randomized algorithms and dynamics~\cite{RT-GC-FD:12}, including for instance distributed PageRank computation~\cite{HI-RT:10}. 

\subsection*{Paper organization}
Sections~\ref{sect:friedkin-dynamics} and~\ref{sect:gossip-dynamics} are devoted to present the two models of opinion dynamics which we are interested in: the classical Friedkin and Johnsen's model and our new gossip algorithm, respectively. For both dynamics, we state a  convergence result. Section~\ref{sect:analysis} is then devoted to provide a proof of these statements. A few comments are given in the concluding section.

\subsection*{Notation}
Real and nonnegative integer numbers are denoted by $\real$ and $\integernonnegative$, respectively.
We use $\card{\mathcal{S}}$ to denote the cardinality of set $\mathcal{S}$, and $\|\cdot\|_2$ to denote the Euclidean norm. 
Provided $\G=(\V,\E)$ is a directed graph with node set $\V$ and edge set $\E$, we define for each node $i\in \V$ the set of neighbors $\neigh_i=\setdef{j\in \V}{(i,j)\in \E}$ and the degree $d_i=\card{\neigh_i}$. We assume that $(i,i)\in\mathcal{E}$ for all $i\in\V$, so that $d_i\ge1$ for every $i\in \V$. Such an edge is said to be a self-loop.
%
We refer the reader to~\cite{FB-JC-SM:09} for a broader introduction to graph theory and for related definitions.

\section{Social influence and prejudices}
\label{sect:friedkin-dynamics}
We consider two models of opinion dynamics: one is the well-known Friedkin and Johnsen's model, which we describe below, while the other is a related randomized model which we describe in the next section. 

We consider a set of agents $\V$, whose potential interactions are encoded by a directed graph $\G=(\V,\E)$, which we refer to as the {\em{social network}}. Each agent $i\in \V$ is endowed with a state $x_i(k)$, which evolves in discrete time, and represents its {\em{belief or opinion}}. We denote the vector of beliefs as $x(k)\in\real^\V$. An edge $(i,j)\in \E$ means that agent $j$ may directly influence the belief of agent $i$. To avoid trivialities, we assume that $\card{\V}>1$.

\subsection{Friedkin and Johnsen's model} 
Here we recall Friedkin and Johnsen's model and we give a convergence result based on the topology of the underlying social network.
%

Let $W\in \real^{\V\times\V}$ be a nonnegative matrix which defines the strength of the interactions ($W_{ij}=0$ if $(i,j)\not\in \E$) and $\Lambda$ be a diagonal matrix describing how sensitive each agent is to the opinions of the others, based on interpersonal influeneces. We assume that $W$ is row-stochastic, i.e., $W\1=\1$, where $\1$ denotes the vector of ones, and we set $\Lambda=I-\diag(W)$, where $\diag(W)$ collects the self-weights given by the agents.
The dynamics of opinions $x(k)$ proposed in \cite{NEF-ECJ:99} is 
\begin{equation}\label{eq:friedkin}x(k+1)= \Lambda W x(k) + (I-\Lambda) u,\end{equation}
with $x(0)=u$ and $u\in \real^\V$. The vector $u$, which corresponds to the individuals' preconceived opinions, also appears as an input at every time step. The presence of this input is the main feature of this model, and marks its difference with, for instance, the mentioned models which are based on bounded confidence. 
As a consequence of~\eqref{eq:friedkin}, the opinion profile at time $k\in \integernonnegative$ is equal to 
$$x(k)=\big ( (\Lambda W)^k +\sum_{s=0}^{k-1} (\Lambda W)^s (I-\Lambda) \big) u.$$ 
The limit behavior of the opinions is described in the following result.

\begin{proposition}[Convergence]\label{prop:convergence-friedkin}
Assume that from any node $\ell\in \V$ there exists a path from
$\ell$ to a node $m$ such that  $W_{mm}>0$.
Then, the opinions converge and
$$x':=\lim_{k\to+\infty} x(k)=(I-\Lambda W)^{-1}(I-\Lambda)u.$$
\end{proposition}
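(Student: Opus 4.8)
The plan is to reduce everything to the single spectral fact that $\rho(\Lambda W)<1$, after which the limit can be read off the closed form already derived. From
$$x(k)=(\Lambda W)^k u+\Big(\sum_{s=0}^{k-1}(\Lambda W)^s\Big)(I-\Lambda)u,$$
once $\rho(\Lambda W)<1$ is known, the first term vanishes as $k\to+\infty$, the matrix $I-\Lambda W$ is invertible (because $1$ is not an eigenvalue), and the Neumann series $\sum_{s\ge0}(\Lambda W)^s$ converges to $(I-\Lambda W)^{-1}$. This yields $x'=(I-\Lambda W)^{-1}(I-\Lambda)u$ directly. This portion is routine; the whole difficulty is concentrated in the strict spectral bound.

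For that bound, I would first record that $\Lambda W\ge0$ and that its $i$-th row sum equals $\Lambda_{ii}(W\1)_i=1-W_{ii}\le1$, since $W$ is row-stochastic and $\Lambda=I-\diag(W)$. Hence $\Lambda W\1\le\1$ componentwise, so $\rho(\Lambda W)\le1$, and the rows whose sum is \emph{strictly} below $1$ are exactly the nodes $m$ with $W_{mm}>0$. The strategy is then to show that the row-sum vectors $v^{(N)}:=(\Lambda W)^N\1$ decrease to $0$. Nonnegativity of $\Lambda W$ gives $0\le v^{(N+1)}=(\Lambda W)v^{(N)}\le v^{(N)}$, so $v^{(N)}$ converges monotonically to some $v^\star\ge0$ which satisfies the fixed-point equation $(\Lambda W)v^\star=v^\star$.

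The crux — and the step I expect to be the main obstacle — is proving that the reachability hypothesis forces $v^\star=0$, via a Perron-type maximum argument. Let $M=\max_i v^\star_i$ and suppose $M>0$, with $S=\setdef{i}{v^\star_i=M}$ the nonempty set of maximizers. For $i\in S$ the identity $M=\Lambda_{ii}\sum_j W_{ij}v^\star_j$, together with $v^\star_j\le M$, $\sum_j W_{ij}=1$ and $\Lambda_{ii}\le1$, forces simultaneously $\Lambda_{ii}=1$ (that is, $W_{ii}=0$) and $v^\star_j=M$ for every $j$ with $W_{ij}>0$. Thus $S$ contains no node with positive self-weight and is closed under taking positive-weight out-neighbours. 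Identifying the edges of $\G$ with the positive entries of $W$, the path guaranteed by the hypothesis from any $i\in S$ to a node $m$ with $W_{mm}>0$ stays entirely in $S$, so $m\in S$; but $m\in S$ requires $W_{mm}=0$, a contradiction. Hence $M=0$, i.e.\ $v^{(N)}\to0$.

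Finally, since the entries of $(\Lambda W)^N$ are nonnegative and each is bounded by the corresponding row sum $v^{(N)}_i\to0$, we conclude $(\Lambda W)^N\to0$ and therefore $\rho(\Lambda W)<1$, closing the argument. The only delicate points to handle carefully are the justification that the monotone limit $v^\star$ is genuinely a fixed point (passing to the limit in $v^{(N+1)}=(\Lambda W)v^{(N)}$) and the reading of the hypothesis as providing a \emph{positive-weight} path; everything else is bookkeeping.
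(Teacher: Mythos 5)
Your proof is correct, and it reaches the same pivotal reduction as the paper --- everything hinges on showing that the nonnegative matrix $\Lambda W$, whose $i$-th row sums to $1-W_{ii}\le 1$, is Schur stable under the reachability hypothesis --- but you prove that spectral fact by a genuinely different mechanism. The paper isolates it as a standalone lemma on substochastic matrices (Lemma~\ref{lemma:substoch_stab}) and argues by \emph{finite-time saturation of deficiency}: the set of deficiency rows of $M^k$ is nondecreasing in $k$ and, because every node has a path to a deficiency node, equals all of $\V$ after some $k^\star$ steps, whence $\|M^{k^\star}\|_\infty=\nu<1$ and $M^k\1\le\nu^{\lfloor k/k^\star\rfloor}\1$. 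You instead pass to the monotone limit $v^\star$ of the row-sum vectors $(\Lambda W)^N\1$ and kill it with a maximum principle on the fixed-point equation $v^\star=\Lambda W v^\star$, using the reachability hypothesis to show the maximizer set would have to contain a node with $W_{mm}>0$. Both are sound; the paper's version buys an explicit geometric decay rate (which it later reuses quantitatively in the ergodicity argument, where $\rho<1$ controls the correlation decay), while yours is arguably more self-contained, since it does not require justifying that the deficiency set saturates in finitely many steps --- the one assertion the paper's Lemma~\ref{lemma:substoch_stab} leaves essentially unargued. The one point you rightly flag, namely that the hypothesis speaks of a path in $\G$ while the argument needs a path along \emph{positive} entries of $W$ (respectively of $\Lambda W$), is a genuine imprecision, but it is present in the paper's own proof in exactly the same form, so it is not a defect of your approach relative to theirs; note also that any node $i$ on the path with $\Lambda_{ii}=0$ is itself a deficiency node (its row of $\Lambda W$ vanishes), so only the vanishing of off-diagonal weights on edges of $\G$ is at issue.
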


\begin{proof}
 Due to the assumption, $\Lambda$ is a substochastic matrix, that is, a matrix with positive entries which sum to less than or equal to one along each row.
Then, $\Lambda W$ is substochastic also, and 
Schur stable { by Lemma~\ref{lemma:substoch_stab} (proved in Section~\ref{sect:analysis})}. Thus, the dynamics in (1) with the constant input
term $(I-\Lambda)u$ is convergent to $x'$. 
\end{proof}
 
The assumption of the proposition implies that each agent is influenced by at least one stubborn agent.
As shown in the proof, this is sufficient to guarantee the stability of the opinion dynamics.
In practice, we expect that in a social network most agents will have some level of obstinacy, thus a positive $W_{ii}$.

Let $V:=(I-\Lambda W)^{-1}(I-\Lambda)$, which is referred to as
the total effects matrix in~\cite{NEF-ECJ:99}.
Since $W$ is stochastic,
we observe that under the assumption of
Proposition~\ref{prop:convergence-friedkin} also $V$ is  stochastic:
this means that the limit opinion of each agent is a convex combination
of the preconceived  opinions of the group. However, we note that 
$x'$ is not a ``consensus'',
but a more general opinion profile such that
$x_i'=\sum_j V_{ij}u$.
Note that instead a consensus is reached if $W$ has zero
diagonal ({\it i.e.}, $\Lambda=I$) and the graph {  is aperiodic and} has a globally
reachable node; see for instance~\cite{FB-JC-SM:09}.

\subsection{Example from~\cite{NEF-ECJ:99}}\label{sect:example}

Here, we briefly describe an example from~\cite{NEF-ECJ:99} to illustrate
how the opinion dynamics arise in the context of social networks.

We consider a group of $N$ agents and study how opinions are formed
through interactions. The model is an abstraction of experiments
conducted and reported in the reference. The general flow
of the experiments is as follows:
\begin{enumerate}
\item The agents are presented with an issue
(related to sports, surgery, school, etc.)
on which opinions can range, say, from 1 to 100.
\item Each agent forms an initial opinion on the issue.
\item The agents can communicate over phone with other
agents (predetermined by the experiment organizer)
individually to discuss the issue. 
\item After several rounds of discussion, they settle on final
opinions that may or may not be in agreement.
\item They are also asked to provide estimates of the relative
interpersonal influences of other group members on
their final opinions. 
\end{enumerate}

As a simple example, we describe a case with four agents.
Let the initial and final opinion vectors be
\begin{align*}
  x(0) &= [25~ 25~ 75~ 85]^{\top},\\
  x' &=  [60~ 60~ 75~ 75]^{\top}.
\end{align*}

The matrix $W$ which determines the influence network for this
group is given by\footnote{ Note that the matrix $W$ slightly differs from that on page six of~\cite{NEF-ECJ:99} because the rows of the latter do not sum exactly to one, due to rounding errors.}
\[
 W = \begin{bmatrix}
       .220 & .120 & .360 & .300\\
       .147 & .215 & .344 &.294\\
        0   & 0   & 1   & 0\\
     .090   & .178 & .446 &.286
    \end{bmatrix},
%
\]
where the entries represent the distribution of relative
interpersonal influences on the issue. {  Note that agent 3 in this example is ``totally stubborn'', meaning that it does not change its opinion at all during the evolution.} This matrix is obtained
from the experiment data $x(0)$, $x'$, and the estimate of
the relative interpersonal influences.

We take $\Lambda =I-\text{diag}(W)=\text{diag}(.780,.785,0,.714)$;
the entries represent the agents' susceptibilities to interpersonal
influence. The off-diagonal entries of $W$ are the weights
of the influence by the others.
For example, $W_{12}=.12$ shows that the direct relative
influence of agent~2 on agent~1 is .120. The matrix $V$ is
\[
 V = \begin{bmatrix}
       .280 & .045 & .551 & .124\\
       .047 & .278 & .549 &.126\\
        0   & 0   & 1   & 0\\
     .030   & .048 & .532 &.390
    \end{bmatrix}.
\]
This matrix indicates the influence of each agent on every other
agent in the final opinions through the flow of direct and indirect
interactions. For example, $V_{23}=.549$ shows that almost 55\%
of the final opinion of agent~2 is determined by agent~3.

The evolution of the opinions is illustrated by the simulations in Figure~\ref{fig:FJmodel}, which respectively plot the state $x(k)$ and the corresponding limit point $x'$ (marked by blue circles).

\begin{figure}[htb]
\begin{center}
\psfrag{x}{$x$}
\psfrag{time}{$k$}
\psfrag{averaged}{$\overline{x}$}
\includegraphics[width=.6\columnwidth]{./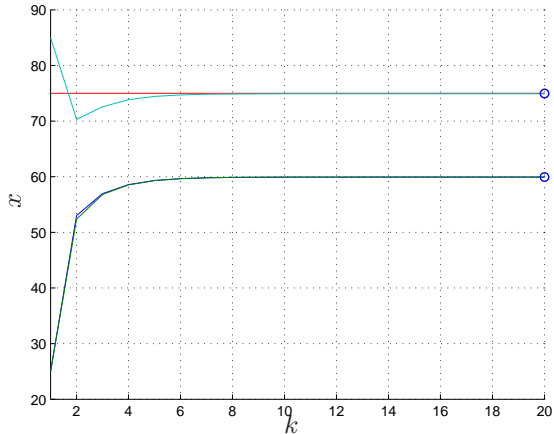}
\caption{Evolution of the opinion dynamics~\eqref{eq:friedkin} in the example of Section~\ref{sect:example}. The opinions $x$ converge to the limit $x'$ (marked by blue circles).}
\label{fig:FJmodel}
\end{center}
\end{figure}

\section{Gossip opinion dynamics}
\label{sect:gossip-dynamics}
We propose here a class of randomized opinion dynamics, which translates the idea of Friedkin and Johnsen's model into a ``gossip'' communication model.

The agents' beliefs evolve according to the following stochastic update process.
Each agent $i\in \V $ starts with an initial belief $x_i(0)=u_i\in \real$. At each time $k\in\integernonnegative$ a directed link is randomly sampled from a uniform distribution over $\E$. If the edge $(i, j)$ is selected at time $k$, agent $i$ meets agent $j$ and updates its belief to a convex combination of its previous belief, the belief of $j$, and its initial belief. Namely, 
\begin{align}\label{eq:gossip-friedkin}
\nonumber x_i(k+1)&=h_{i}\big((1-\gamma_{ij})x_i(k)+\gamma_{ij}x_j(k)\big)+(1-h_{i})u_i\\
x_\ell(k+1)&=x_\ell(k)\qquad \forall \ell\in \V\setminus\{i\},
\end{align}
where the weighting coefficients $h_i$ and $\gamma_{ij}$ satisfy the following assumption.
\begin{assumption}\label{assmp:coefficients}\rm
Let the diagonal matrix $H$ be defined by $H_{ii}=h_i$ and the matrix $\Gamma$ defined by $\Gamma_{ij}=\gamma_{ij}$. We assume that (i) $h_i\in [0,1]$ for all $i\in \V$; (ii) $\Gamma$ is row-stochastic, {\it i.e.}, for all $i$ and $j$ in $\V$ it holds $\gamma_{ij}\ge0$, $\sum_\ell \gamma_{i\ell}=1$; and (iii) $\gamma_{ij}=0$ if $j$ is not a neighbor of $i$.
\end{assumption}

As a consequence of this assumption, we observe that at all times the opinions of the agents are convex combinations of their initial prejudices.
Note that if an edge of the form $(i,i)$ is sampled at time $k$, then
$$ x_i(k+1)=h_{i} x_i(k)+(1-h_{i})u_i,
$$
that is, the opinion of agent $i$ moves back closer to its preconceived opinion.
{  Also note that if $h_i=0$, then agent $i$ is totally stubborn, whereas if $h_i=1$, then agent $i$ is totally open-minded: we may say that $1-h_i$ is proportional to the obstinacy of the agent.}

Our analysis -detailed in the next section- requires to study first the average dynamics of the gossip model.
\begin{lemma}[Expected dynamics]\label{lemma:expected-dynamics}
Under Assumption~\ref{assmp:coefficients}, the dynamics~\eqref{eq:gossip-friedkin} is such that
\begin{align}
\label{eq:mean-dynamics-gossip-opinions}
\Exp[x(k+1)]
=&\Big(I-\frac{1}{\card{\E}}\big(D(I-H) +H(I-\Gamma)\big)
\Big)\Exp[x(k)] 
+\frac{1}{|\E|}D(I-H)u,\end{align}
where $D$ is the degree matrix of $\G$, a diagonal matrix whose diagonal entry is equal to the degree $d_i=\card{\neigh_i}$.
\end{lemma}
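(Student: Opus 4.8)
The plan is to compute the one-step conditional expectation $\Exp[x(k+1)\mid x(k)]$ directly from the update rule~\eqref{eq:gossip-friedkin}, and then pass to the unconditional expectation via the tower property. Since the sampled edge is drawn uniformly from $\E$, each edge $(i,j)$ is selected with probability $1/\card{\E}$, and only the head node $i$ of the chosen edge changes its opinion. Fixing a component $\ell$, I would split the randomness into two disjoint events: either one of the $d_\ell$ out-edges $(\ell,j)$, $j\in\neigh_\ell$, is sampled (each with probability $1/\card{\E}$), or no out-edge of $\ell$ is sampled, which happens with probability $1-d_\ell/\card{\E}$ and leaves $x_\ell$ unchanged. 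This yields
\begin{align*}
\Exp[x_\ell(k+1)\mid x(k)]
&= \frac{1}{\card{\E}}\sum_{j\in\neigh_\ell}\Big[h_\ell\big((1-\gamma_{\ell j})x_\ell(k)+\gamma_{\ell j}x_j(k)\big)+(1-h_\ell)u_\ell\Big]\\
&\quad + \Big(1-\frac{d_\ell}{\card{\E}}\Big)x_\ell(k).
\end{align*}

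The second step is to simplify the three contributions using Assumption~\ref{assmp:coefficients}. Because $\gamma_{\ell j}=0$ for $j\notin\neigh_\ell$ and $\sum_j\gamma_{\ell j}=1$, the sum $\sum_{j\in\neigh_\ell}(1-\gamma_{\ell j})$ equals $d_\ell-1$, the sum $\sum_{j\in\neigh_\ell}\gamma_{\ell j}x_j(k)$ equals $(\Gamma x(k))_\ell$, and the input contributes $d_\ell(1-h_\ell)u_\ell$. Assembling the result in matrix form, the map $x(k)\mapsto\Exp[x(k+1)\mid x(k)]$ is affine with linear part $I-\frac{1}{\card{\E}}\big(D(I-H)+H(I-\Gamma)\big)$ and constant term $\frac{1}{\card{\E}}D(I-H)u$; the block $-\frac{1}{\card{\E}}H(I-\Gamma)$ arises precisely from combining the $-\frac{1}{\card{\E}}Hx(k)$ diagonal contribution with the off-diagonal term $\frac{1}{\card{\E}}H\Gamma x(k)$.

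Finally, since $u$ is a deterministic constant and the conditional expectation above is affine in $x(k)$, taking total expectations and applying $\Exp[x(k+1)]=\Exp\big[\Exp[x(k+1)\mid x(k)]\big]$ immediately gives~\eqref{eq:mean-dynamics-gossip-opinions}.

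The main obstacle---indeed essentially the only delicate point---is the bookkeeping of the self-loop contribution. Since $(\ell,\ell)\in\E$, node $\ell$ is its own neighbor, so the diagonal entry of the resulting matrix receives contributions both from the ``stay'' weights $h_\ell(1-\gamma_{\ell j})$ and from the self-weight $h_\ell\gamma_{\ell\ell}$ hidden inside $(\Gamma x(k))_\ell$. Overlooking the latter produces a diagonal missing the $h_\ell\gamma_{\ell\ell}$ term and thus fails to reassemble the clean factor $H(I-\Gamma)$; keeping track of it is what makes the identification exact. Everything else is routine linear algebra.
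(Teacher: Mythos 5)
Your proposal is correct and follows essentially the same route as the paper: both condition on the uniformly sampled edge and average the affine update with weight $1/\card{\E}$, the paper organizing the computation as $\Exp[A(k)]=\frac{1}{\card{\E}}\sum_{(i,j)\in\E}A^{(i,j)}$ in matrix form while you carry out the equivalent bookkeeping component-wise. Your identification of the coefficient of $x_\ell(k)$ as $1-\frac{1}{\card{\E}}\bigl(d_\ell(1-h_\ell)+h_\ell\bigr)$ plus the $\frac{1}{\card{\E}}h_\ell\gamma_{\ell\ell}$ contribution from $(\Gamma x(k))_\ell$ reassembles exactly the paper's $I-\frac{1}{\card{\E}}\bigl(D(I-H)+H(I-\Gamma)\bigr)$.
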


Based on this preliminary result, we make the following statement about the convergence properties of the gossip opinion dynamics. 
In order to prove it, we also make an assumption involving the topology of the network as follows. This assumption corresponds to that in Proposition~\ref{prop:convergence-friedkin} on the diagonal entries of $W$.
\begin{assumption}\label{assmp:hm}\rm
From each node $\ell$ in $\V$, there exists
a path in $\G$ from $\ell$ to a node $m$ such that $h_m\neq 1$. 
\end{assumption}
{ 
Note that if $\G$ is strongly connected, then Assumption~\ref{assmp:hm} is satisfied.}

\begin{theorem}[Ergodicity and limit behavior]\label{thm:gossip-opinions} 
Under Assumptions~\ref{assmp:coefficients} and~\ref{assmp:hm}, it holds that
\begin{enumerate}
\item the expected dynamics~\eqref{eq:mean-dynamics-gossip-opinions} converges and
\begin{align*}
x^{\star}&:=\lim_{k\rightarrow\infty}\Exp[x(k)]\\
&=(D(I-H)+H(I-\Gamma))^{-1}D(I-H)u;\end{align*}
\item the dynamics~\eqref{eq:gossip-friedkin} is {\em mean-square-ergodic}, that is, 
\begin{align}\label{eq:gossip-is-ergodic}\lim_{k\to+\infty}\Exp[\|{\bar x}(k)-x^{\star} \|_2^2]=0,\end{align}
where $$\bar x(k)=\frac1{k+1}\sum_{\ell=0}^k x(\ell).$$
\end{enumerate}
\end{theorem}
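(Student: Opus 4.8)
The plan is to treat the two statements separately, deriving the limit of the expected dynamics first and then bootstrapping the mean-square ergodicity from it.

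For part~1, the plan is to read~\eqref{eq:mean-dynamics-gossip-opinions} as an affine recursion $\Exp[x(k+1)] = A\,\Exp[x(k)] + b$ with
\[
A = I - \tfrac{1}{\card{\E}}\big(D(I-H)+H(I-\Gamma)\big), \qquad b = \tfrac{1}{\card{\E}}D(I-H)u ,
\]
and to establish Schur stability of $A$. First I would check that $A$ is entrywise nonnegative: its off-diagonal entries equal $\tfrac{1}{\card{\E}}h_i\gamma_{ij}\ge 0$, and its diagonal entries $1-\tfrac{1}{\card{\E}}\big(d_i(1-h_i)+h_i(1-\gamma_{ii})\big)$ are nonnegative because $d_i(1-h_i)+h_i(1-\gamma_{ii})\le d_i\le\card{\E}$. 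Using $\Gamma\1=\1$ one computes the $i$-th row sum of $A$ to be $1-\tfrac{1}{\card{\E}}d_i(1-h_i)\le 1$, with equality exactly when $h_i=1$ (recall $d_i\ge 1$). Hence $A$ is a nonnegative substochastic matrix whose deficient rows are precisely those with $h_i<1$. Invoking Lemma~\ref{lemma:substoch_stab} together with Assumption~\ref{assmp:hm}---which guarantees that every node reaches such a deficient row---gives $A$ Schur stable, so $I-A=\tfrac{1}{\card{\E}}\big(D(I-H)+H(I-\Gamma)\big)$ is invertible and the recursion converges to $(I-A)^{-1}b$; a direct simplification of this expression yields the claimed $x^\star$.

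For part~2, I would write the gossip update~\eqref{eq:gossip-friedkin} as a random affine recursion $x(k+1)=A(k)x(k)+b(k)$, where the pair $(A(k),b(k))$ depends only on the edge sampled at time $k$ and is therefore independent and identically distributed across $k$, with $\Exp[A(k)]=A$ and $\Exp[b(k)]=b$ the matrices of Lemma~\ref{lemma:expected-dynamics}. Setting $y(k):=x(k)-x^\star$ and using $x^\star=Ax^\star+b$ turns this into $y(k+1)=A(k)y(k)+w(k)$ with $w(k):=A(k)x^\star+b(k)-x^\star$ satisfying $\Exp[w(k)]=0$. Two facts make the time-average manageable: the transition and noise at step $k$ are independent of the natural filtration $\mathcal F_k$ generated by the edges sampled before time $k$, while $y(k)$ is $\mathcal F_k$-measurable; and, as observed after Assumption~\ref{assmp:coefficients}, every $x(k)$ is a convex combination of the fixed prejudices $u$, so $\|x(k)\|_2\le\sqrt{\card{\V}}\,\|u\|_\infty$ deterministically. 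The latter yields a uniform bound $\sup_k\|P(k)\|<\infty$ on the correlation matrices $P(k):=\Exp[y(k)y(k)^\T]$.

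The core estimate then computes the averaged second moment by a tower argument. For $\ell\le m$, iterating $y(s+1)=A(s)y(s)+w(s)$ and taking conditional expectations gives $\Exp[y(m)\mid\mathcal F_\ell]=A^{m-\ell}y(\ell)$, whence $\Exp[y(\ell)^\T y(m)]=\mathrm{tr}\big(A^{m-\ell}P(\ell)\big)$. Expanding
\[
\Exp[\|\bar x(k)-x^\star\|_2^2]=\frac{1}{(k+1)^2}\sum_{\ell=0}^{k}\sum_{m=0}^{k}\Exp[y(\ell)^\T y(m)]
\]
and using Schur stability from part~1 in the form $\|A^n\|_2\le C\rho^n$ with $\rho<1$, the diagonal terms contribute $O(1/k)$ and the off-diagonal terms are controlled by the geometric series $\sum_{d\ge 1}\rho^d$, again giving $O(1/k)$. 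Letting $k\to\infty$ yields~\eqref{eq:gossip-is-ergodic}. The main obstacle I anticipate is the stability step of part~1: one must transfer the reachability in Assumption~\ref{assmp:hm}, phrased on $\G$, to reachability in the graph induced by the positive entries of $A$ (the support of $H\Gamma$), by stopping a path at the first node with $h<1$ and identifying $\E$ with the support of $\Gamma$. Granting this, part~2 is technically routine, its only quantitative inputs being the geometric decay of $\|A^n\|_2$ and the deterministic boundedness of the opinions.
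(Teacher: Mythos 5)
Your proposal is correct and follows essentially the same route as the paper: part~1 reproduces the paper's stability argument (nonnegativity of $\bar A$, row sums equal to $1-\tfrac{1}{\card{\E}}d_i(1-h_i)$, and the substochastic-reachability lemma combined with Assumption~\ref{assmp:hm}), while part~2 is the paper's direct second-moment computation, resting on the same two ingredients---the deterministic boundedness of $x(k)$ and the conditioning identity $\Exp[e(\ell)^\top e(\ell+r)]=\Exp\big[e(\ell)^\top \Exp[A(k)]^{r}e(\ell)\big]=O(\rho^r)$ followed by a geometric series. The one subtlety you flag, transferring reachability from $\G$ to the graph associated with $\bar A$ by truncating the path at the first node with $h<1$, is resolved correctly and is in fact glossed over in the paper's own proof.
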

\smallskip
 We note here that the assumption of   having $H$ different from the identity matrix is not restrictive. Indeed, if instead $H=I$ and $\G$ has a globally reachable node, then the dynamics reduces to a standard asymmetric gossip consensus algorithm, studied for instance in~\cite{FF-SZ:08b,FF-SZ:08a}.
We also notice that our results assume the edges to be chosen for the update according to a uniform distribution. This choice is made for simplicity, but our analysis can easily be extended to consider more general or different distributions.

In our result, we prove ergodicity in the sense that the time-averages (also known as Ces\`aro averages or Polyak averages in some contexts) converge to the limit of the expected dynamics in mean square sense. We could also have given the corresponding statement of almost-sure convergence: its proof would closely follow the arguments in~\cite{DA-GC-FF-AO:11}.
In this work, we prefer to focus on mean square convergence because we are able to give a detailed proof by a direct argument.
The ergodicity of the opinion dynamics is illustrated by the simulations in Figure~\ref{fig:demo}, which respectively plot the state $x(k)$ and the corresponding time-averages.

\begin{figure}[htb]
\begin{center}
\psfrag{x}{$x$}
\psfrag{time}{$k$}
\psfrag{averaged}{$\overline{x}$}
\includegraphics[width=0.49\columnwidth]{./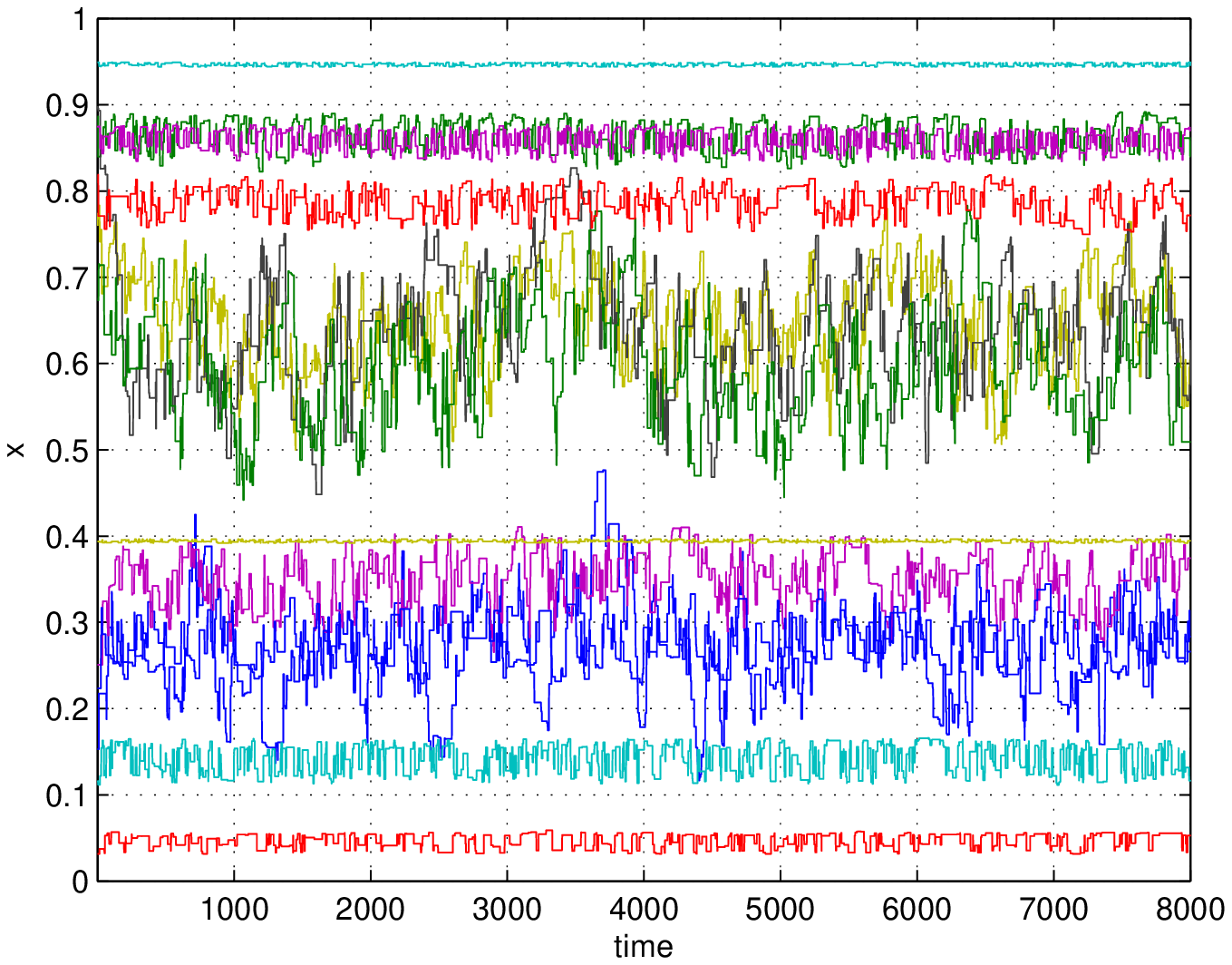}
\includegraphics[width=0.49\columnwidth]{./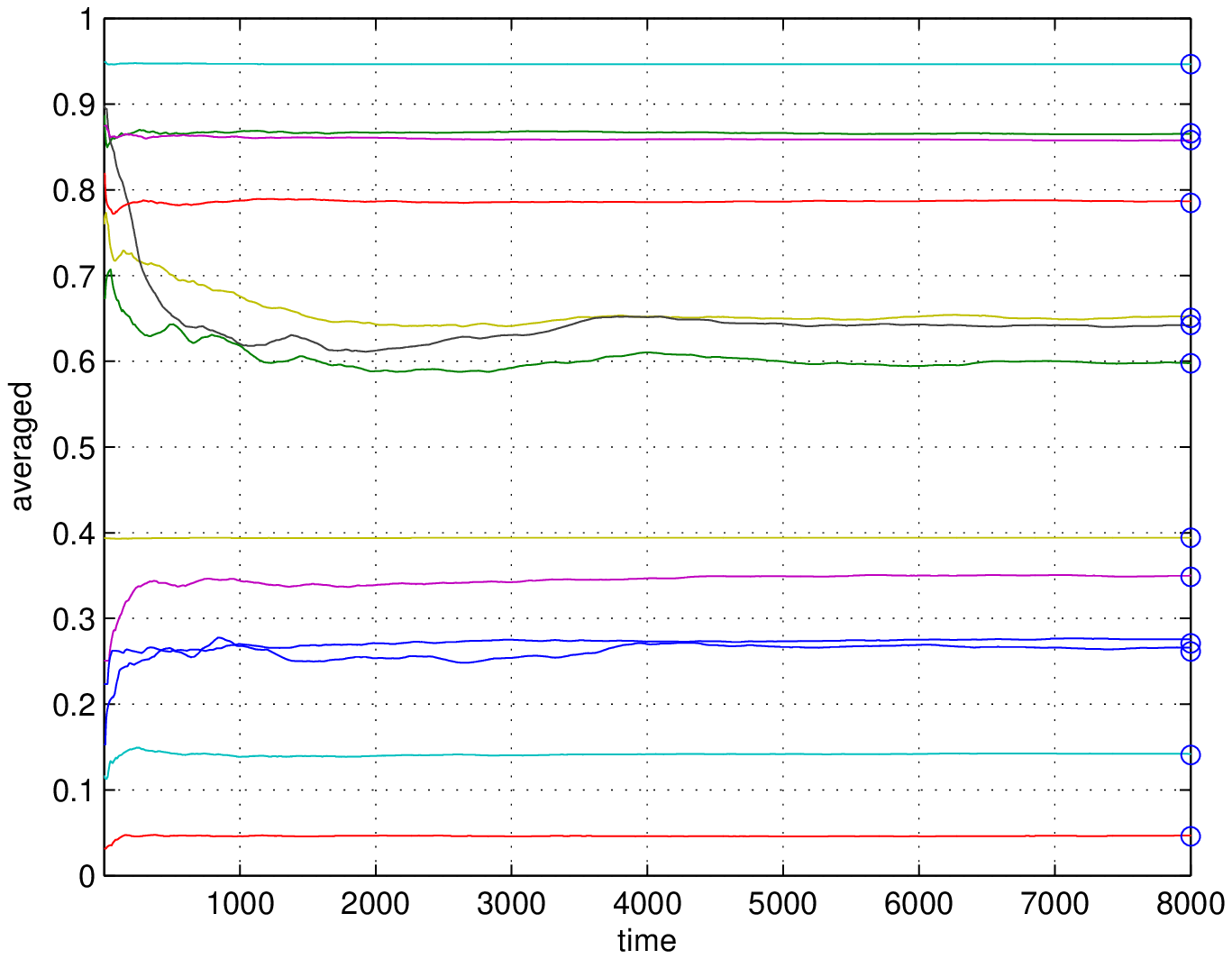}
\caption{Typical sample-path behavior of the beliefs in a social network topology with $n=13$. The belief process $x$ (top plot) oscillates persistently in a bounded interval. As the belief process is ergodic, the time averages (bottom plot) converge, when time goes to infinity, to $x^\star$ (marked by blue circles).}
\label{fig:demo}
\end{center}
\end{figure}

\subsection{Relating gossip and synchronous dynamics}
\label{sect:relation-gossip-friedkin}

We now discuss the interpretation of our convergence theorem in the
context of opinion dynamics. The original proposal by Friedkin and Johnsen abstracts from a precise analysis of the communication process among the agents, and postulates synchronous rounds of interaction. In fact, the lack of a more precise model for inter-agent interactions is acknowledged in~\cite{NEF-ECJ:99} by saying that ``it is obvious that interpersonal influences do not occur in the simultaneous way that is assumed''.

The proposed gossip dynamics tries to introduce a more reasonable model of  the communication process among the agents. Indeed, as mentioned in the example in the previous section, in the experiments conducted in  \cite{NEF-ECJ:99}, the agents were allowed to discuss pairwise. 

In what follows, we investigate the relationship between the two dynamics more carefully.
From a purely mathematical point of view, we observe
the following fact.
\begin{prop}\label{prop:relations-with-friedkin} If $H$ and $\Gamma$ are chosen as
\begin{equation} \label{eqn:H}
 h_i=\begin{cases}
({d_i-(1-\lambda_{ii})})/{d_i}&\text{if }d_i\neq1\\
{ 0}&\text{otherwise}
\end{cases}
\end{equation}
\begin{equation}\label{eqn:Gamma}
 \gamma_{ij}=\begin{cases}
 \frac{d_i(1-h_i)+h_i-(1-\lambda_{ii}w_{ii})}{h_i}&\text{if }i=j,\ d_i\neq1\\
 \frac{\lambda_{ii}w_{ij}}{h_i}&\text{if }i\neq j,\ d_i\neq 1\\
 1&\text{if }i=j,\ d_i=1\\
 0&\text{if }i\neq j,\ d_i=1
 \end{cases}
\end{equation}
then $\Gamma$ and $H$ satisfy Assumption~\ref{assmp:coefficients} and the expected dynamics \eqref{eq:mean-dynamics-gossip-opinions} can be written as
\begin{equation}\label{eq:mean-dynamics-gossip-opinions-b}\Exp[x(k+1)]
=\big(I-\frac{1}{\card{\E}}) \Exp[x(k)] + \frac{1}{\card{\E}} \big(\Lambda W \Exp[x(k)] +(I-\Lambda)u\big). \end{equation} 
Furthermore, $x^\star=x'$.
\end{prop}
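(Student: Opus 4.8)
The plan is to prove the statement by direct verification, organized around two matrix identities. First I would expand the right-hand side of the target expression \eqref{eq:mean-dynamics-gossip-opinions-b}, collecting the terms in $\Exp[x(k)]$ and in $u$, to rewrite it as
\[
\Big(I-\tfrac{1}{\card{\E}}(I-\Lambda W)\Big)\Exp[x(k)] + \tfrac{1}{\card{\E}}(I-\Lambda)u.
\]
Comparing this with the expected dynamics \eqref{eq:mean-dynamics-gossip-opinions}, the entire claim about the dynamics reduces to showing
\[
D(I-H) = I-\Lambda
\qquad\text{and}\qquad
D(I-H)+H(I-\Gamma) = I-\Lambda W,
\]
the first matching the prejudice term and the second the matrix acting on $\Exp[x(k)]$. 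Subtracting the two shows the second is equivalent to $H(I-\Gamma)=\Lambda(I-W)$. Throughout I would use $\lambda_{ii}=1-w_{ii}$, which follows from $\Lambda=I-\diag(W)$.

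Next I would establish the two identities entrywise, treating $d_i\neq 1$ and $d_i=1$ separately. Since $D(I-H)$ is diagonal with entries $d_i(1-h_i)$, formula \eqref{eqn:H} gives $d_i(1-h_i)=w_{ii}=(I-\Lambda)_{ii}$ when $d_i\neq1$; when $d_i=1$ the only neighbor is the self-loop, so $w_{ii}=1$, $\lambda_{ii}=0$, $h_i=0$, and again $d_i(1-h_i)=1=w_{ii}$. For the second identity I would compare entries of $H(I-\Gamma)$ and $\Lambda(I-W)$: the off-diagonal entries require $h_i\gamma_{ij}=\lambda_{ii}w_{ij}$, which is precisely the definition of $\gamma_{ij}$ in \eqref{eqn:Gamma}, while the diagonal entries require $h_i(1-\gamma_{ii})=\lambda_{ii}(1-w_{ii})$, which follows by substituting the formula for $\gamma_{ii}$ and simplifying with $d_i(1-h_i)=w_{ii}$ and $1-w_{ii}=\lambda_{ii}$. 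The degenerate case $d_i=1$ is immediate since then $h_i=0$ and $\lambda_{ii}=0$.

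It then remains to check Assumption~\ref{assmp:coefficients}. Condition (i) is immediate: $h_i=1-w_{ii}/d_i\in[0,1]$ because $0\le w_{ii}\le1\le d_i$. Condition (iii) holds because the off-diagonal $\gamma_{ij}$ is proportional to $w_{ij}$ and hence vanishes for non-neighbors. For condition (ii), summing the off-diagonal entries gives $\sum_{j\neq i}\gamma_{ij}=\lambda_{ii}(1-w_{ii})/h_i$, and the diagonal formula is exactly $1$ minus this, so each row sums to one. The main obstacle is nonnegativity: the off-diagonal entries are clearly nonnegative, but $\gamma_{ii}\ge0$ amounts to $h_i\ge\lambda_{ii}^2=(1-w_{ii})^2$, i.e. $1-w_{ii}/d_i\ge(1-w_{ii})^2$. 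I would reduce this to $w_{ii}\bigl(2-w_{ii}-1/d_i\bigr)\ge0$, which holds because ``$d_i\neq1$'' together with the standing $d_i\ge1$ forces $d_i\ge2$, whence $2-w_{ii}\ge1\ge1/d_i$; the same bound gives $h_i\ge1/2>0$, so the divisions in \eqref{eqn:Gamma} are well defined.

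Finally, the equality $x^\star=x'$ follows by substituting the two established identities into the expression for $x^\star$ in Theorem~\ref{thm:gossip-opinions}:
\[
x^\star=(D(I-H)+H(I-\Gamma))^{-1}D(I-H)u=(I-\Lambda W)^{-1}(I-\Lambda)u=x',
\]
the last equality being the limit in Proposition~\ref{prop:convergence-friedkin}; the invertibility of $I-\Lambda W$ needed here is guaranteed under the same connectivity hypothesis, now phrased as Assumption~\ref{assmp:hm}.
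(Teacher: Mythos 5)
Your proposal is correct and follows essentially the same route as the paper: verify Assumption~\ref{assmp:coefficients} entrywise (the paper bounds the numerator of $\gamma_{ii}$ by dropping $\lambda_{ii}w_{ii}\ge 0$ and using $(d_i-1)(1-h_i)\ge 0$, while you reduce it to $w_{ii}(2-w_{ii}-1/d_i)\ge0$ --- equivalent computations), and then reduce the dynamical claim to the two identities $D(I-H)=I-\Lambda$ and $D(I-H)+H(I-\Gamma)=I-\Lambda W$, from which $x^\star=x'$ follows. Your write-up is in fact slightly more careful than the paper's in checking the $d_i=1$ case and in noting that $h_i\ge 1/2>0$ makes the divisions in \eqref{eqn:Gamma} well defined.
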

\begin{proof}
 First, we notice trivially that $h_i\in[0,1]$, and since if $d_i=1$ then $\gamma_{ij}\in\{0,1\}$ with $\sum_{j}\gamma_{ij}=1$.
If $i\in\V$ is such that $d_i>1$, then $\gamma_{i,j}\geq0$ and
\begin{align*}
\gamma_{ii}\geq0 \iff d_i(1-h_i)+h_i-(1-\lambda_{ii}w_{ii})\geq0,
\end{align*}
from which we observe
\begin{align*}
d_i(1-h_i)+h_i-(1-\lambda_{ii}w_{ii})&\geq d_i(1-h_i)+h_i-1\\
&\geq\min\{d_i-1,0\}=0.
\end{align*}
We deduce that all the entries of $\Gamma$ are nonnegative.
Let us compute now
\begin{align*}
\sum_{j\in \V}\gamma_{ij}
&=\frac{1}{h_i}\big(\lambda_{ii}\sum_{j\neq i}{w_{ij}}+d_i(1-h_i)+h_i-1+\lambda_{ii}w_{ii}\big)\\
&=\frac{1}{h_i}\left[\lambda_{ii}({1-w_{ii}})+d_i(1-h_i)+h_i-1+\lambda_{ii}w_{ii}\right]\\
&=\frac{1}{h_i}\left[\lambda_{ii}+1-\lambda_{ii}+h_i-1\right]=1.
\end{align*}
We conclude that $\Gamma$ is row-stochastic and has all entries in the interval [0,1].
The thesis is then obtained by noticing that the expressions in \eqref{eqn:H} and \eqref{eqn:Gamma} imply
$$
D(I-H)=I-\Lambda
$$
and 
$$
D(I-H)+H(I-\Gamma)=I-\Lambda W.
$$
\end{proof}
 
In words, we may say that, under the assumption that $\Gamma$ and $H$ are chosen as in Proposition~\ref{prop:relations-with-friedkin}, then the expected
dynamics~\eqref{eq:mean-dynamics-gossip-opinions-b} is a ``lazy'' (slowed down) version of the Friedkin and Johnsen's dynamics associated to the matrix $W$. Hence, Theorem \ref{thm:gossip-opinions} shows that the average dynamics $\Exp[x(k)]$ converges to the { limit} opinions of the original model~\eqref{eq:friedkin}.
This relationship between the two dynamics provides an additional justification and a new perspective on the model originally proposed by Friedkin and Johnsen.
Furthermore, we observe that Proposition~\ref{prop:convergence-friedkin} can be immediately deduced as a corollary of Theorem~\ref{thm:gossip-opinions}. 

The form of~\eqref{eqn:H} and \eqref{eqn:Gamma} may seem complicated at first sight. However, this is not surprising if we think of other examples of randomized dynamics over networks. For instance, in problems of consensus~\cite{FF-SZ:08a}, localization~\cite{CR-PF-HI-RT:13a}, and PageRank computation~\cite{HI-RT:10}, the definition of the update matrices of the randomized dynamics is not trivial and must be done carefully in order to reconstruct, on average, the desired synchronous dynamics.

\subsection{Example (continued)}\label{sect:example-gossip}

In this subsection, we continue with the example of four agents.
Let the weight matrices $\Gamma$ and $H$ in the update equation (2)
be chosen according to \eqref{eqn:H} and \eqref{eqn:Gamma}. Then we have
\begin{align*}
 H &=\text{diag}( .945,0.946,.000,.928),\\
 \Gamma
  &= \begin{bmatrix}
 .356  &  .099  & .297  &  .248\\
    .122  &  .349  &  .285  &  .244\\
         0    &     0   & 1.0000     &    0\\
    .069   & .137 &  .343 &   .451
     \end{bmatrix},\\
 D &=\text{diag}(4,4,1,4).
\end{align*}
Suppose that at time $k$, as a result of gossiping, the edge between agents 1 and 2 is
chosen. In this case, the dynamics (2) can be written in the
matrix form as follows:
\begin{align*}
 x(k+1)
 &= \begin{bmatrix}
     h_1 (1-\gamma_{12}) & h_1\gamma_{12} & 0 & 0\\
     0 & 1 & 0 & 0\\
     0 & 0 & 1 & 0\\
     0 & 0 & 0 & 1
    \end{bmatrix} x(k)+ \begin{bmatrix}
       1-h_1 & 0 & 0 & 0\\
       0     & 0 & 0 & 0\\
       0 & 0 & 0 & 0\\
       0 & 0 & 0 & 0
     \end{bmatrix}u\\
 &= \begin{bmatrix}
       .851 &    .094 & 0 & 0\\
      0 &  1 & 0 & 0\\
     0 & 0 & 1 & 0\\
     0 & 0 & 0 & 1
    \end{bmatrix} x(k)
   + \begin{bmatrix}
         .055 & 0 & 0 & 0\\
       0     &  0 & 0 & 0\\
       0 & 0 & 0 & 0\\
       0 & 0 & 0 & 0
     \end{bmatrix}u.
\end{align*}
For other edge choices, similar expressions can be obtained.
We can see in simulations (Figure~\ref{Fig1andFig2}) that the states oscillate, but the time averages converge, as predicted by our results.
\begin{figure}[htb]
\begin{center}
\psfrag{x}{$x$}
\psfrag{time}{$k$}
\psfrag{averaged}{$\overline{x}$}
\includegraphics[width=0.49\columnwidth]{./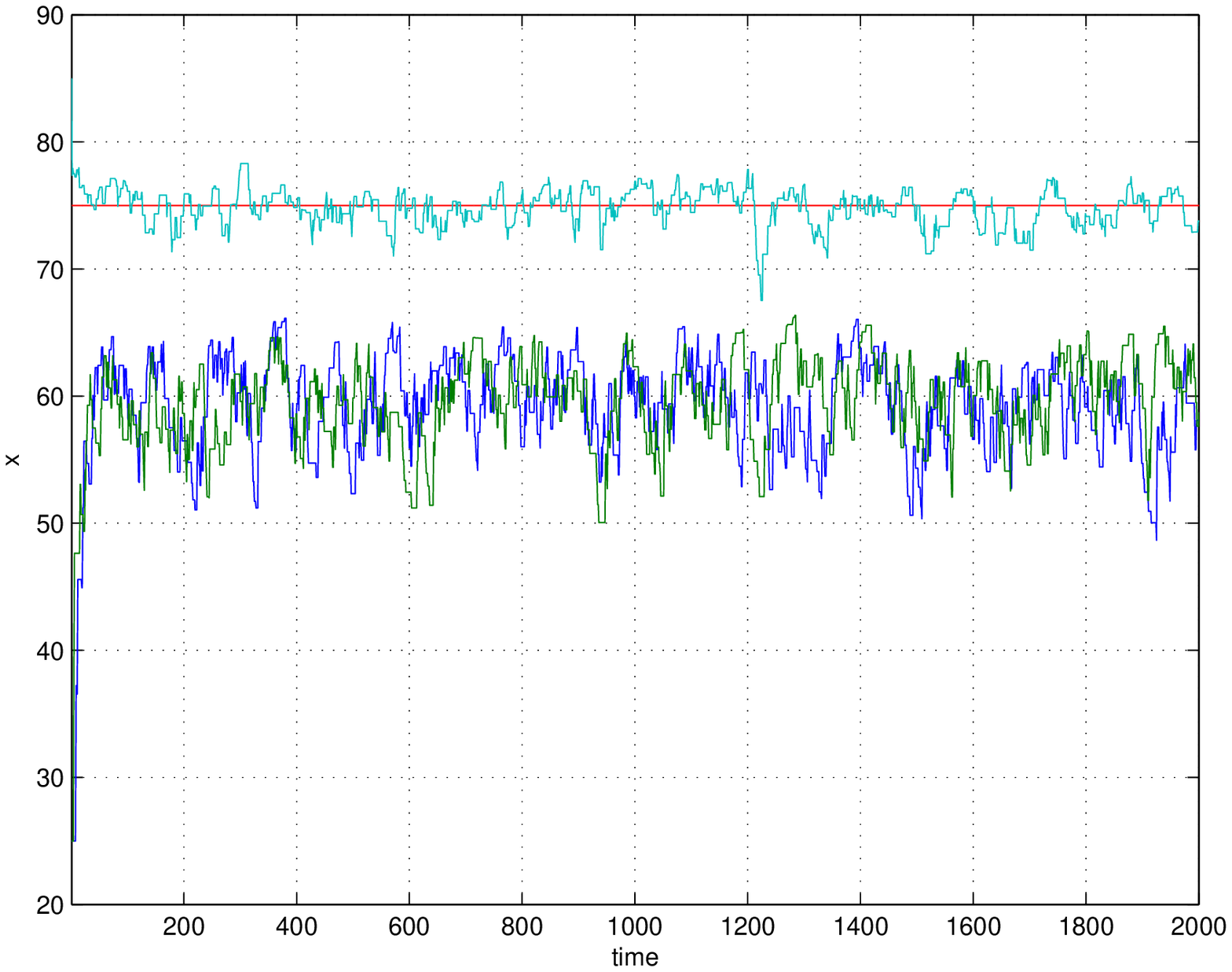}
\includegraphics[width=0.49\columnwidth]{./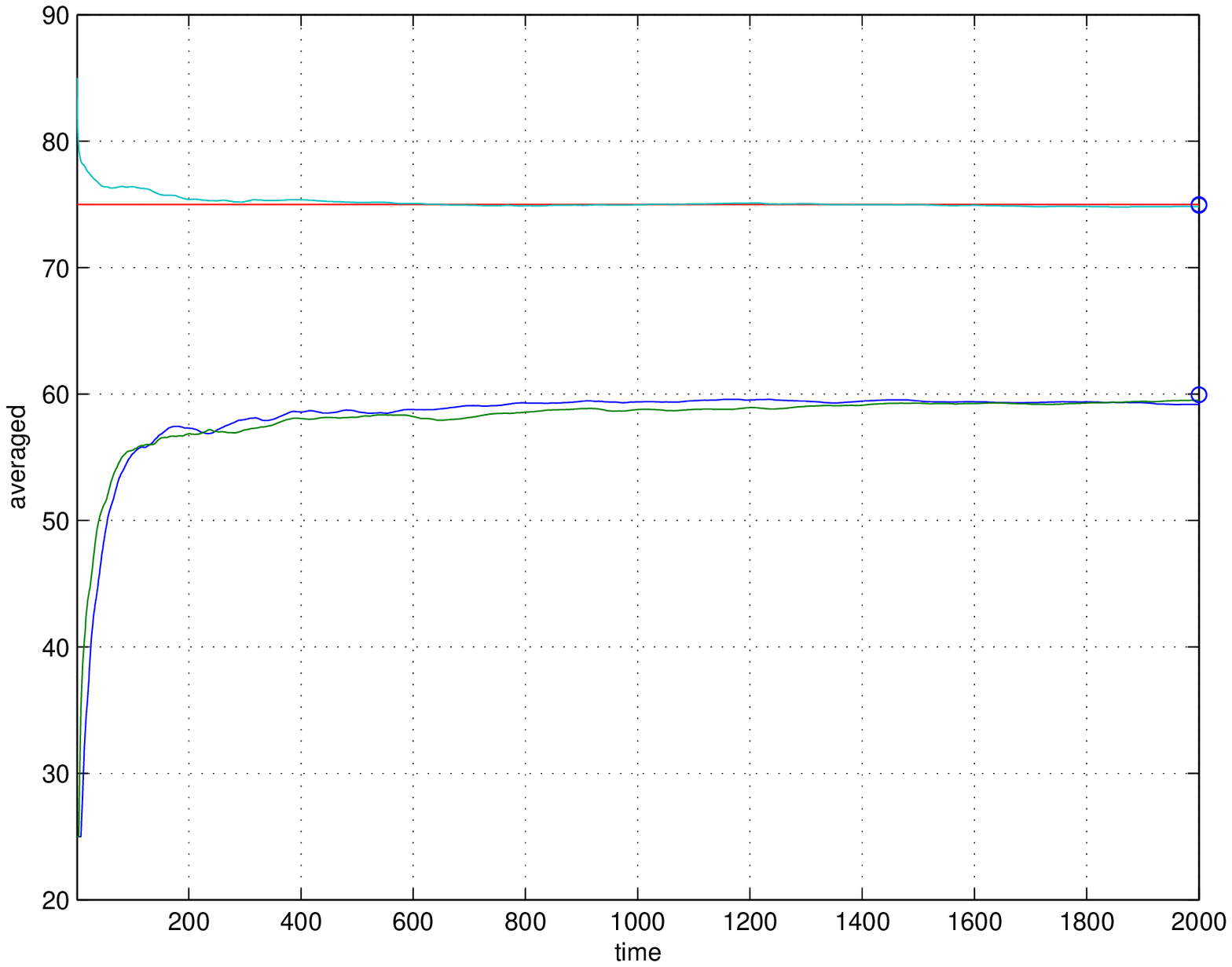}
\caption{Evolution of the opinions in the example of Section~\ref{sect:example-gossip}. The belief process $x$ (top plot) oscillates persistently in a bounded interval. As the belief process is ergodic, the time averages (bottom plot) converge, when time goes to infinity, to the limit of Friedkin and Johnsens' dynamics (marked by blue circles).}
\label{Fig1andFig2}
\end{center}
\end{figure}

\section{Analysis of the gossip model}
\label{sect:analysis}
This section is devoted to prove Lemma~\ref{lemma:expected-dynamics} and Theorem~\ref{thm:gossip-opinions}.
Their proof requires a few steps: (i) we rewrite~\eqref{eq:gossip-friedkin} as a random affine iterate; (ii) we compute relevant moments of the matrices involved in the affine iterate; (iii) we verify that the mean dynamics~\eqref{eq:mean-dynamics-gossip-opinions} is stable and we compute its limit; (iv) we prove ergodicity by verifying~\eqref{eq:gossip-is-ergodic} through a direct argument.

\subsection{Random affine iterates}
The dynamics~\eqref{eq:gossip-friedkin} can be formally rewritten in vector form 
\begin{align*}
x(k+1)=&(I-\e_i\e_i^\top(I-H))\left(I+\gamma_{ij}(\e_i\e_j^\top-\e_i\e_i^\top)\right)x(k)\\&+\e_i\e_i^\top(I-H)u,
\end{align*}
provided the edge $(i,j)$ is chosen at time $k$.
If we define the matrices 
\begin{align*}A^{(i,j)}=& (I-\e_i\e_i^\top(I-H))\left(I+\gamma_{ij}(\e_i\e_j^\top-\e_i\e_i^\top)\right)\\
B^{(i,j)}=&\, \e_i\e_i^\top(I-H),\end{align*}
then the dynamics is 
$$
x(k+1)=A(k)
x(k) + B(k)
u
$$
where $\Prob[A(k)=A^{(i,j)}]=\frac1{\card{\E}}$
and $\Prob[B(k)=B^{(i,j)}]=\frac1{\card{\E}}$, for all $k\in\integernonnegative$.

\subsection{Expected affine iterates}
The expected dynamics of~\eqref{eq:gossip-friedkin} is 
$$
\Exp[x(k+1)]=
\Exp{[A(k)]}\Exp[x(k)] + \Exp{[B(k)]}u,
$$
and the two average matrices can be explicitly computed as follows, provided Assumption~\ref{assmp:coefficients} holds. 
\begin{align*}
\Exp{[A(k)]}&=\frac{1}{|\E|}\sum_{(i,j)\in \E}A^{(i,j)}\\
&=\frac{1}{|\E|}\sum_{i\in \mathcal{V}}\sum_{j\in \neigh_i}\big[I-\e_i\e_i^\top(I-H)+\gamma_{ij}\left(\e_i\e_j^\top-\e_i\e_i^\top\right)
\\
& \qquad-(I-H)\gamma_{ij}\e_i\e_i^\top\left(\e_i\e_j^\top-\e_i\e_i^\top\right)\big]\\
&=\frac{1}{|\E|}\sum_{i\in \mathcal{V}}\sum_{j\in \neigh_i}\big[I-\e_i\e_i^\top(I-H)+\gamma_{ij}\left(\e_i\e_j^\top-\e_i\e_i^\top\right)
\\
&\qquad
-(I-A)\Gamma_{ij}\left(\e_i\e_j^\top-\e_i\e_i^\top\right)\big]\\
&=\frac{1}{|\E|}\sum_{i\in \mathcal{V}}\sum_{j\in \neigh_i}\left[I-\e_i\e_i^\top(I-H)+H\gamma_{ij}\left(\e_i\e_j^\top-\e_i\e_i^\top\right)\right]\\
&=I-\frac{1}{|\E|}\left[D(I-H)-H\Gamma+H\right].
\end{align*}
Similarly,
\begin{align*}
\Exp{[B(k)]}
&=\frac{1}{|\E|}\sum_{(i,j)\in \E}B^{(i,j)}\\
&=\frac{1}{|\E|}\sum_{i\in \mathcal{V}}\sum_{j\in \neigh_i}\e_i\e_i^\top(I-H)u\\
&=\frac{1}{|\E|}\sum_{i\in \mathcal{V}}|\neigh_i|\e_i\e_i^\top(I-H)u\\&=\frac{1}{|\E|}D(I-H)u.
\end{align*}
%
%
\subsection{Stability of expected iterates}
Before showing the stability of the expected dynamics, which is studied in the Proposition \ref{stability}, we present a technical lemma. 
{ 
Although the result is already known, we prefer to include a short proof for completeness. In order to state the lemma, we need some terminology. The graph {\em associated} to a given square matrix $M\in \real^{\V\times \V}$ is the graph with node set $\V$ and an edge $(i,j)$ if and only if $M_{ij}>0$. We recall that a matrix is said to be substochastic if it is nonnegative and the entries on each of its rows sum up to no more than one. Moreover, every node corresponding to a row which sums to less than one is said to be a {\em deficiency} node. 

\begin{lemma}\label{lemma:substoch_stab}
Consider a  substochastic matrix $M\in \real^{\V\times \V}$. If in the graph associated to $M$ there is a path from every node to a deficiency node, then $M$ is Schur stable.
\end{lemma}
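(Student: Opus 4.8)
The plan is to show that some power of $M$ is a strict contraction in the $\infty$-norm, which immediately yields $\rho(M)<1$ and hence Schur stability. The natural object to track is the vector of row sums of the iterated matrices, namely $M^k\1$, and the goal is to prove that after sufficiently many steps every one of its entries has dropped strictly below one.

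First I would record two elementary consequences of substochasticity and nonnegativity. Substochasticity means $M\1\le\1$ componentwise; since $M\ge 0$, multiplying this inequality repeatedly on the left by $M$ preserves the ordering and gives the monotone chain $\1\ge M\1\ge M^2\1\ge\cdots\ge 0$. In particular $(M^k\1)_i\le 1$ for every node $i$ and every $k$, and the sequence of row sums is nonincreasing in $k$.

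The core of the argument is a propagation-of-deficiency induction. For each node $i$, let $d_i$ be the length of the shortest path in the graph associated to $M$ from $i$ to a deficiency node; the hypothesis guarantees $d_i<\infty$, and since a shortest path visits each node at most once we have $d_i\le\card{\V}-1$. I claim that $(M^{d_i+1}\1)_i<1$, and I would prove it by induction on $d_i$. When $d_i=0$ the node $i$ is itself a deficiency node, so $(M\1)_i<1$ by definition of deficiency. For $d_i=\ell\ge 1$ there is an out-neighbor $j$ (that is, $M_{ij}>0$) with $d_j=\ell-1$, so the inductive hypothesis gives $(M^{\ell}\1)_j<1$. Writing
\[
(M^{\ell+1}\1)_i=\sum_{k\in\V}M_{ik}\,(M^{\ell}\1)_k,
\]
each factor $(M^{\ell}\1)_k$ is at most $1$ by the bound from the previous paragraph, while the single term $k=j$ is strictly smaller; hence the sum is strictly below $\sum_k M_{ik}=(M\1)_i\le 1$, which proves the claim.

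Finally I would set $N=\max_{i\in\V}(d_i+1)\le\card{\V}$. By the monotonicity of the row sums, $(M^N\1)_i\le(M^{d_i+1}\1)_i<1$ for every $i$, so, using $M^N\ge 0$, one gets $\|M^N\|_\infty=\max_i(M^N\1)_i<1$. Since $\rho(M)\le\|M^N\|_\infty^{1/N}<1$, the matrix $M$ is Schur stable. I expect the only delicate point to be the bookkeeping in the induction: one must verify that reaching a deficiency node forces a \emph{strict} decrease in the corresponding row sum, and then make this strict decrease uniform over all starting nodes by passing to the common power $N$. Everything else reduces to the monotonicity estimate and Gelfand's inequality for the spectral radius.
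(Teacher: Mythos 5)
Your proof is correct and follows essentially the same strategy as the paper's: both arguments exhibit a power $M^N$ (the paper's $M^{k^\star}$) all of whose row sums are strictly below one and conclude Schur stability from there, the only cosmetic difference being that you finish via $\rho(M)\le\|M^N\|_\infty^{1/N}$ while the paper iterates $M^{nk^\star}\1\le\nu^n\1$ directly. In fact your shortest-path induction supplies an explicit justification (with the bound $N\le\card{\V}$) for the growth of the set of deficiency nodes, a step the paper's proof merely asserts.
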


\begin{proof}
 First note that  $M^k$ is substochastic for all $k$.  More precisely,  if we let $V_k$ to be the set of deficiency nodes of $M^k$,  then $V_k\subseteq V_{k+1}$ for every positive integer $k$. Moreover, there exists $k^\star$ such that 
$V_{k^\star}=\V$, that is all nodes for $M^{k^\star}$ are deficiency nodes.
We can then define $\nu=\max_i\sum_j M^{k^\star}_{ij}<1$.
Given any $k\in \N$, we can write $k=n k^\star+r$ with $k\in \{0, \dots, k^\star-1\}$ and integer $n$, and notice that $M^k\1\leq M^{nk^\star}\1\leq \nu^{n}\1$ (provided inequalities are understood componentwise). The last inequality implies that $M^k$ converges to $0$ as $k\to +\infty$.
\end{proof}
 
} 

 
\begin{proposition}\label{stability}
Under Assumptions~\ref{assmp:coefficients} and~\ref{assmp:hm}, the 
matrix $$\bar A=I-\frac{1}{\card{\E}}\big(D(I-H)+H(I-\Gamma)\big)$$ is Schur stable ({\it i.e.}, it has all eigenvalues in the open unit disk).
\end{proposition}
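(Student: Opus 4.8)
The plan is to apply Lemma~\ref{lemma:substoch_stab}: I will show that $\bar A$ is substochastic and that, in the graph associated to $\bar A$, every node can reach a deficiency node, whence Schur stability is immediate. First I would write out the entries of $\bar A$. Since $D(I-H)$ is diagonal and $\big(H(I-\Gamma)\big)_{ij}=h_i(\delta_{ij}-\gamma_{ij})$, one obtains
\[
\bar A_{ij}=\frac{1}{\card{\E}}h_i\gamma_{ij}\quad(i\neq j),\qquad \bar A_{ii}=1-\frac{1}{\card{\E}}\big(d_i(1-h_i)+h_i(1-\gamma_{ii})\big).
\]
The off-diagonal entries are nonnegative because $h_i\ge 0$ and $\gamma_{ij}\ge 0$ by Assumption~\ref{assmp:coefficients}. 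For the diagonal entries I would use $d_i(1-h_i)+h_i(1-\gamma_{ii})=d_i-h_i(d_i-1+\gamma_{ii})\le d_i<\card{\E}$, where the strict inequality follows from $\card{\E}=\sum_j d_j\ge d_i+(\card{\V}-1)>d_i$ (using $d_j\ge 1$ for all $j$ and $\card{\V}>1$); hence $\bar A_{ii}>0$ and $\bar A\ge 0$.

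Next I would compute the row sums. Using row-stochasticity of $\Gamma$, namely $\sum_{j\neq i}\gamma_{ij}=1-\gamma_{ii}$, the off-diagonal entries of row $i$ contribute $\frac{1}{\card{\E}}h_i(1-\gamma_{ii})$, which cancels the matching term inside $\bar A_{ii}$ and leaves $\sum_j\bar A_{ij}=1-\frac{1}{\card{\E}}d_i(1-h_i)$. Since $d_i\ge 1$ and $h_i\in[0,1]$, every row sum is at most one, so $\bar A$ is substochastic; moreover row $i$ sums to strictly less than one if and only if $h_i\neq 1$. Therefore the deficiency nodes of $\bar A$ are exactly the nodes $i$ with $h_i\neq 1$.

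The remaining, and I expect most delicate, step is the connectivity hypothesis. The associated graph of $\bar A$ carries an off-diagonal edge $(i,j)$ precisely when $h_i>0$ and $\gamma_{ij}>0$, so it is in general only a subgraph of $\G$. Given a node $\ell$, I would take the path in $\G$ to a node $m$ with $h_m\neq 1$ supplied by Assumption~\ref{assmp:hm} and argue that it descends to a path to a deficiency node in the graph of $\bar A$. Two points need care: if any node visited before $m$ has $h=0$, then that node is already a deficiency node and the path may be truncated there; and one must check that the edges used carry positive $\Gamma$-weight so that they persist in the graph of $\bar A$. This last point is the genuine interplay between the support of $\Gamma$ and the edge set $\E$, and it is transparent in the natural case where $\gamma_{ij}>0$ for every edge $(i,j)\in\E$. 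Once reachability of a deficiency node from every node is established, Lemma~\ref{lemma:substoch_stab} yields that $\bar A$ is Schur stable.
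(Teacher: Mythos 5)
Your computation of the entries of $\bar A$, the nonnegativity check, and the row-sum identity $\sum_j \bar A_{ij}=1-\frac{1}{\card{\E}}d_i(1-h_i)$ coincide exactly with the paper's proof, which likewise reduces everything to Lemma~\ref{lemma:substoch_stab} (your extra verification that $\bar A_{ii}>0$ is not needed for substochasticity, but it is harmless). The one step you leave open --- descending the path of Assumption~\ref{assmp:hm} from $\G$ to the graph associated with $\bar A$ --- is precisely the step the paper asserts without argument, and your hesitation is justified: under the stated hypotheses that step can genuinely fail, so the proposition as written admits a counterexample. Take $\V=\{1,2,3\}$, $\E=\{(1,1),(1,2),(2,2),(2,3),(3,3)\}$, $H=\diag(1,1,0)$, and $\Gamma=I$. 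Assumption~\ref{assmp:coefficients} holds ($\Gamma$ is row-stochastic and supported on the self-loops, which belong to $\E$), and Assumption~\ref{assmp:hm} holds because $1\to2\to3$ is a path in $\G$ and $h_3=0\neq1$. Yet $H(I-\Gamma)=0$ and $D(I-H)=\diag(0,0,1)$, so $\bar A=\diag\big(1,1,\frac{4}{5}\big)$ has eigenvalue $1$ and is not Schur stable.

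Of the two caveats you isolate, the first is indeed harmless: if a node $i$ encountered before $m$ has $h_i=0$, then its row sum is $1-d_i/\card{\E}<1$ (since $d_i\ge1$ by the self-loops), so $i$ is itself a deficiency node and the path may be truncated there, exactly as you say. The second caveat is the real obstruction: an off-diagonal edge $(i,j)$ of $\G$ survives in the graph of $\bar A$ only if $h_i\gamma_{ij}>0$, and nothing in Assumption~\ref{assmp:coefficients} forces $\gamma_{ij}>0$ on edges of $\E$. The clean repair is either to strengthen Assumption~\ref{assmp:coefficients} to require $\gamma_{ij}>0$ for every $(i,j)\in\E$ with $i\neq j$ (the ``natural case'' you mention), or to restate Assumption~\ref{assmp:hm} as requiring a path to a node $m$ with $h_m\neq1$ in the graph whose edge set is $\{(i,j):\gamma_{ij}>0\}$ rather than in $\G$; with either amendment your argument closes. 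In short: you take the same route as the paper, and the gap you flag is a gap in the paper's proof (indeed in the statement), not merely in your write-up.
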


\begin{proof}
 Note that 
$\bar A_{ij}=\frac{1}{\card{\E}} h_i \gamma_{ij}$ if $i\neq j$,  and
$\bar A_{ii}=1-\frac{1}{\card{\E}}\big(   d_i(1-h_i)+h_i(1-\gamma_{ii})\big)$. From these formulas, we observe that all entries of $\bar A$ are nonnegative.
Next, we compute 
\begin{align*}
\sum_{j\in \V} \bar A_{ij}=&1-\frac{1}{\card{\E}}  d_i(1-h_i)+\frac{1}{\card{\E}}h_i(1-\gamma_{ii})+\frac{1}{\card{\E}}  h_i\sum_{j\neq i}\gamma_{ij}\\=&
1-\frac{1}{\card{\E}}  d_i(1-h_i).
\end{align*}

{ Note that $d_i>0$ by the presence of self-loops: consequently $\sum_j \bar A_{ij}<1$ if $h_i<1$.}
Hence, under Assumption~\ref{assmp:hm}, we have that $\bar A$ is a {\em substochastic} matrix corresponding to a graph with a path from any node $\ell$ to a node $m$ whose row sums up to less than one. By Lemma~\ref{lemma:substoch_stab} such a matrix is Schur stable.
\end{proof}
 
 
As a consequence of this result, we deduce that the matrix $D(I-H)+H(I-\Gamma)$ is invertible, and then
\begin{align*}x^{\star}=&\lim_{k\to+\infty}\Exp[x(k)]\\=&(I-\Exp{[A(k)]})^{-1}\Exp{[B(k)]}u\\=&(D(I-H)+H(I-\Gamma))^{-1}D(I-H)u. \end{align*}
{ 
We have by now completed the proof of the first claim of Theorem~\ref{thm:gossip-opinions}.}

\subsection{Ergodicity}
We are now ready to complete the proof of Theorem~\ref{thm:gossip-opinions}, by showing the ergodicity property. Our argument follows the same lines of the convergence results of \cite{HI-RT:10} and \cite{ CR-PF-HI-RT:13a}. Preliminarily, we observe that by the definition of~\eqref{eq:gossip-friedkin}, the opinions $x(k)$ are bounded, as they satisfy \begin{equation}\label{eq:bounded-states}\min_{\ell\in\V}{u_\ell}\le x_i(k)\le\max_{\ell\in\V}{u_\ell}\end{equation} for all $i\in \V$ and $k\ge0$.
In particular, all moments of $x(k)$ are uniformly bounded. 
Let now $e(k):=x(k)-x^{\star}$ be the error from the limit average, and  observe that
$$
\bar x(k)-x^{\star}=\frac{1}{k+1}\sum_{\ell=0}^{k}(x(\ell)-x^{\star})=\frac{1}{k+1}\sum_{\ell=0}^{k}e(\ell).
$$
We thus have
\begin{align*}
\Exp\| \bar x(k)-x^{\star}\|^2&=\Exp\left\| \frac{1}{(k+1)} \sum_{\ell=0}^{k} e(\ell)\right\|^2\\
&=\frac{1}{(k+1)^2} \sum_{\ell=0}^{k}\Exp\left[e(\ell)^{\top} e(\ell)\right]+2 \sum_{\ell=0}^{k} \sum_{\ell=r}^{k-\ell} \Exp \left[e(\ell)^{\top} e(\ell+r)\right].
\end{align*}
In view of~\eqref{eq:bounded-states}, there exists $\eta\in\real$ such that
$$
\frac{1}{(k+1)} \sum_{\ell=0}^{k}\Exp \left[\|e(\ell)\|^2\right] \leq\eta\qquad \forall k.
$$
Next, we note that  
\begin{align}
\nonumber
\Exp \left[e(\ell)^\top e(\ell+r)\right]
& = \Exp\left[ \Exp \left[e(\ell)^\top e(\ell+r)|x(\ell)\right]\right]\\
&  = \Exp\left[ e(\ell)^\top  \Exp \left[e(\ell+r)|x(\ell)\right]\right] \label{eq: x_l}\\
\nonumber&  = \Exp\left[ e(\ell)^\top   \left( \Exp \left[x(\ell+r) |x(\ell)\right]-x^{\star}  \right) \right].
\end{align}
By repeated conditioning on $x(\ell), x(\ell+1), \ldots, x(\ell+r-1),$ we obtain
\begin{align*}
\Exp \big[x(\ell+&r) | x(\ell) \big]=\Exp\left[A(k)\right]^{r} x(\ell)+  \sum_{s=0}^{r-1} \Exp[A(k)]^{s} \Exp[B] u\label{eq: x_l},
\end{align*}
and by recalling that $x^{\star}$ is a fixed point for the expected dynamics we get
\begin{equation}\label{eq: x_star}
x^{\star}=\Exp\left[A(k)\right]^{r} x^{\star}+  \sum_{s=0}^{r-1} \Exp[A(k)]^{s} \Exp[B(k)] u.
\end{equation}
From equations \eqref{eq: x_l} and \eqref{eq: x_star} we obtain
\begin{align*}
\Exp \left[e(\ell)^\top e(\ell+r)\right]&= \Exp\left[e(\ell)^\top  \Exp\left[A(k)\right]^{r} e(\ell)\right]\\
&\leq{\eta}\rho^r,
\end{align*}
where, by Lemma~\ref{stability}, $\rho<1$.
Finally, we have
\begin{align*}
\Exp\left[\| \bar x(k)-x^{\star}\|^2\right]&\leq\frac{\eta}{(k+1)^2}\left(k+1+2\sum_{\ell=0}^{k-1}\sum_{r=0}^{k-\ell}\rho^r\right)\\
&\leq\frac{\eta}{(k+1)}\left(1+\frac{2}{1-\rho}\right),
\end{align*}
from which we obtain the thesis.

\section{Conclusion and open problems}
In this paper, we have defined a new model of opinion dynamics, within the framework of randomized gossip dynamics. We have shown that, for suitable choices of the update parameters, the well-known Friedkin and Johnsen's dynamics is equivalent to the average behavior of the dynamics.
Significantly, the average has a very practical meaning, as the gossip dynamics is ergodic, so that local averages (computed along time) match the expectation.

We note that recent related works on opinion dynamics with stubborn agents~\cite{DA-GC-FF-AO:11,JG-RS:13} have given intuitive characterizations of the (average) limit opinion profile, in terms of harmonic functions or potentials. We leave similar studies on our model as a topic of future research.


\end{document}